\newcommand{\bv}[1]{\mathbf{#1}}		
\newcommand{\jbar}{\bar{j \phantom{\tiny \,}} \kern -0.1em}
\newcommand{\jhat}{\hat{j \phantom{\tiny \,}} \kern -0.1em}
\newcommand{\jtil}{\tilde{j \phantom{\tiny \,}} \kern -0.1em}
\newcommand{\vj}{\bv{j}}
\newcommand{\vjbar}{\bar{\vj \phantom{\tiny \,}} \kern -0.1em}
\newcommand{\vjhat}{\hat{\vj \phantom{\tiny \,}} \kern -0.1em}
\newcommand{\vjtil}{\tilde{\vj \phantom{\tiny \,}} \kern -0.1em}
\newcommand{\Acal}{\mathcal{A}}
\newcommand{\Dcal}{\mathcal{D}}
\newcommand{\Ocal}{\mathcal{O}}
\newcommand{\Scal}{\mathcal{S}}
\newcommand{\Tcal}{\mathcal{T}}
\newcommand{\vAcalhat}{\hat{\bm{\Acal} \phantom{a}} \kern-0.5em}
\newcommand{\vAcalcheck}{\check{\bm{\Acal} \phantom{a}} \kern-0.5em}
\newcommand{\vAcalbar}{\bar{\bm{\Acal} \phantom{a}} \kern-0.5em}
\title{Transfer Learning for LQR Control}
\newcommand{\stkout}[1]{\ifmmode\text{\sout{\ensuremath{#1}}}\else\sout{#1}\fi}
\newtheorem{theorem}{Theorem}[section]
\newtheorem{lemma}[theorem]{Lemma}
\newtheorem{definition}{Definition}
\newtheorem{remark}{Remark}
\newcommand{\subscr}[2]{{#1}_{\textup{#2}}}
\newcommand\aamsout{\bgroup\markoverwith{\textcolor{violet}{\rule[0.5ex]{2pt}{1pt}}}\ULon}
\newcommand{\T}{\mathsf{T}}
\newcommand{\mc}{\mathcal}
\newtheorem{Assumption}{Assumption}
\newcommand{\argmin}[2] {\mathrm{arg}\min_{#1}#2}
\DeclareSymbolFont{bbold}{U}{bbold}{m}{n}
\DeclareSymbolFontAlphabet{\mathbbold}{bbold}
\newcommand\oprocendsymbol{\hbox{$\square$}}
\newcommand\oprocend{\relax\ifmmode\else\unskip\hfill\fi\oprocendsymbol}
\newcommand*{\QEDA}{\hfill\ensuremath{\blacksquare}}%
\let\NAT@parse\undefined
\author{Taosha Guo and Fabio Pasqualetti 
\thanks{This material is based upon work supported in part by ARO award W911NF-24-1-0228.
T. Guo and F. Pasqualetti are with the
Department of Mechanical Engineering at the 
University of California, Riverside, Riverside, CA, 92521, USA. Emails: \href{mailto:tguo023@ucr.edu}{\{\texttt{tguo023}},
\href{mailto:fabiopas@ucr.edu}{\texttt{fabiopas\}@ucr.edu}},
}
}
\begin{document}

\graphicspath{{img/}}
\maketitle 
\begin{abstract}
  In this paper, we study a transfer learning framework for Linear
  Quadratic Regulator (LQR) control, where (i) the dynamics of the
  system of interest (target system) are unknown and only a short
  trajectory of impulse responses from the target system is provided,
  and (ii) impulse responses are available from $N$ source systems
  with different dynamics. We show that the LQR controller can be
  learned from a sufficiently long trajectory of impulse
  responses. Further, a transferable mode set can be identified using
  the available data from source systems and the target system,
  enabling the reconstruction of the target system's impulse responses
  for controller design.  
  By leveraging data from source systems, we show that the sample complexity for synthesizing the LQR controller can be reduced by $50 \%$. Algorithms and numerical examples are provided to demonstrate the implementation of the proposed transfer control framework.
  \end{abstract}
\maketitle

\begin{IEEEkeywords}
Optimal control, Transfer learning, Data-driven control
\end{IEEEkeywords}
\section{Introduction}
  In real-world applications, we often encounter complex systems with unknown dynamics, where substantial amounts of training data are used to design  control policies. 
  However, in many practical scenarios, obtaining sufficient data from the system of interest (target system) is infeasible due to limitations such as high cost, incomplete observations, or limited access.  For instance, when  operating Mars rovers in an unknown  environment, designing the control strategy becomes challenging because real-time data collection is limited by the significant cost and delays associated with communication.

 However, in some cases, we may   have access to data from  similar systems (source systems) that share certain characteristics with the target system.  For instance,  data collected from previous Mars exploration missions may provide valuable insights into the dynamics of the current mission's environment. This raises a natural question: how can we leverage the information from  source systems to design control strategies for the target system? 

In this paper we investigate the LQR problem \cite{KZ-JCD-KG:96}, which seeks a control policy for a dynamical system that minimizes a quadratic function of the output and input. Specifically, we focus on linear-time-invariant (LTI) systems with unknown dynamics. We assume access to a short trajectory of data collected from the unknown target system, where the limited data is insufficient to identify the target system's model, or to learn the optimal control policy using existing  direct-data-driven techniques.  However, we have abundant data from $N$ numbers of source systems that share certain characteristics with the target system. 
The question we answer is:   Can data from source systems be leveraged to learn the optimal controller for the target system?
 We demonstrate  that the limited  data from the target system can be utilized  to learn a transferable eigenvalue set from source systems, such that the amount of data required for controller synthesis can be reduced.
  
\noindent \textbf{Related Work.}
In recent years, data-driven control has gained significant attention due to its ability to design control strategies  directly from data, bypassing  the need for explicit models. 
Most of these approaches,  however, require a sufficiently large amount of open-loop data to learn the optimal controller. Additionally, the collected data also needs to satisfy persistent excitation conditions, as demonstrated   in \cite{BR:18}, \cite{KZ-BH-TB:20},   \cite{IM-FD:21}, \cite{GS-AB-CL-LC:18} and \cite{FC-GB-FP:22}. Different from these works, we consider a transfer learning framework where the optimal control policy is computed using  information transferred from source system data, and a small amount of impulse response data collected from the target system. 
 Transfer learning for control systems has also recently gained popularity. For example, 
 \cite{LX-LY-GC-SS:22} identifies the dynamics of an unknown system from samples generated by a similar system;
 \cite{TG-AAAM-VK-FP:23} introduces an imitation and transfer learning framework for LQG control.  Similarly, \cite{RB-VB-FB-SF:24} and  \cite{LL-CDP-PT-NM:23} leverage prior knowledge from similar systems  to reduce experimental workload and the amount of data required for controller design. This paper differs from  \cite{TG-AAAM-VK-FP:23} in that the unknown target system we consider has different dynamics from those of the source systems. Furthermore, unlike \cite{LX-LY-GC-SS:22}, \cite{RB-VB-FB-SF:24} and \cite{LL-CDP-PT-NM:23}, our methods do not impose constraints on the similarity between the target and source systems. Instead, we use the limited data collected from the target system to identify a transferable eigenvalue set from the source systems, 
 enabling efficient controller synthesis with less data.
 
\noindent \textbf{Contribution.}
The main contributions of this
paper are as follows: First, we  show that the optimal control inputs can be computed using a  gain matrix  and a finite window of past inputs and outputs, we  derive a closed-form solution to compute the gain matrix directly from impulse response data. 
Second, we formalize the transfer learning framework  for optimal controller design,  where the impulse responses from $N$ source systems are available alongside limited impulse response data from the target system. \textcolor{black}{By exploiting data from source systems, we show that the sample complexity for LQR controller design  can be reduced by $50\%$. We also provide algorithms and numerical experiments that implement the proposed transfer-control framework.
}

\noindent \textbf{Organization of the paper.} 
The paper is organized as follows. 
Section \ref{Sec: prob for} describes the problem formulation and some useful background notions. Section \ref{Section: LQR design} contains 
the results to compute the data-driven controller in a closed-form using the impulse response data.
Section \ref{Section: transfer learning} demonstrates the results  and algorithms of transfer learning control. Section \ref{Sec: numerical example}  and Section \ref{conclusion} contain  numerical examples and  conclusions.
\section{Problem formulation and preliminary notions}{\label{Sec: prob for}}
Consider the discrete-time, linear, time-invariant system
\begin{equation}\label{eq: system}
\begin{aligned}
    x(t+1) &= A x(t) +B u(t) ,\\
        y(t)  & = Cx(t
        ),
\end{aligned} 
\end{equation} 
where $x(t)\in \mathbb{R}^n$ denotes the state, $u(t)\in \mathbb{R}^m$
the control input, and $y(t)\in \mathbb{R}^l$ the measured output. We
assume that \eqref{eq: system} is controllable and observable. The
finite-horizon LQR problem seeks to find the input sequence
$\{u(t)\}_{t=0}^{T-1}$ that minimizes the cost
\begin{align}\label{eq: LQR cost}
  J = 
   y(T)^{\T}Q y(T)+
\sum_{t=0}^{T-1}y(t)^{\T}Q y(t) +
  u(t)^{\T} R u(t)  , 
\end{align}
where $Q\succeq 0$, $R\succ 0$ are weight matrices and $T$ is the
control horizon. 
If the system matrices $A,B$, and $C$ are given, solutions to this
problem are well known. For instance, the control sequence that
minimizes \eqref{eq: LQR cost} can be obtained through a time-varying
feedback of the state, that is,
\begin{align}{\label{eq: optimal input}} u(t) = - K_t
  x(t),
\end{align}
where
\begin{align}{\label{eq:Kt}}
	K_t = (R+B^{\T}P_{t+1}B)^{-1}B^{\T}P_{t+1}A,
\end{align}
and $P_t$ is computed backwards in time using the following recursion
\begin{equation}{\label{eq: Riccati Pt}}
\begin{aligned}
  P_t  = 	& A^{\T} P_{t+1} A - A^{\T}P_{t+1}B(R+B^{\T}P_{t+1}B)^{-1}\\
  &B^{\T}P_{t+1}A+C^{\T} Q C, \\
\end{aligned}
\end{equation}
with the terminal condition $P_T = C^{T}QC$. When $T = \infty$, the time
varying gain $K_t$ in \eqref{eq: optimal input} converges to the
static gain
\begin{align}{\label{eq:K}}
  K^{*} = (R+B^{\T}P^{*}B)^{-1}B^{\T}P^{*}A ,
\end{align}
where $P^{*}$ is the  (unique) positive definite solution to the discrete time algebraic Riccati equation 
\begin{equation}{\label{eq: Riccati P}}
\begin{aligned}
P^{*}  = & A^{\T} P^{*} A - A^{\T}P^{*}B(R+B^{\T}P^{*}B)^{-1}
\\ & B^{\T}P^{*}A+C^{\T} Q C.
\end{aligned} 
\end{equation}

The above procedures to compute a solution to the LQR problem, and in fact most known model-based solutions \cite{KZ-JCD-KG:96}, requires
the full knowledge of the system dynamics. Data-driven alternatives for the solution of the LQR problem
also exist, which do not require knowledge of the system dynamics yet
necessitate a sufficient, often large, amount of training data, see,
e.g., \cite{KZ-BH-TB:20}, \cite{IM-FD:21}, \cite{GS-AB-CL-LC:18} and
\cite{FC-GB-FP:22}. In this paper we investigate whether it is
possible to reduce the amount of data required to solve the LQR
problem in a data-driven setting by leveraging data from systems with different, yet similar, dynamics.
 In particular, we let the
dynamics of the target system \eqref{eq: system}, denoted by $\Scal_0$ be unknown, yet we
assume the availability of the impulse response 
$M(1:T_0):= \{M(1), M(2), \cdots, M(T_0)\}$ with $T_0 \leq 2n$, so that the existing system identification methods \cite{SO-NO:22} are unable to identify its dynamics. Additionally, we assume the availability of the impulse response  $\{M_i(1:T_1)\}_{i=1}^N$
 from $N$ numbers of source systems, denoted by $\{\Scal_i\}_{i=1}^N$. Our objective  is to learn
the optimal controller for $\Scal_0$ using only $M(1:T_0)$ and $\{M_i(1:T_1)\}_{i=1}^N$, and to characterize how large  $T_0$, $N$ and $T_1$ should be to solve the LQR problem. 
\section{Data-driven optimal controller design}{\label{Section: LQR design}}
 In this section, we present our first result, that is, given the weight matrices $Q$ and $R$, compute  the optimal control sequences $\{u(t)\}_{t=0}^{T-1}$ that minimizes \eqref{eq: LQR cost} by using impulse responses $M(1,T)$  collected from an unknown system. 
 To begin with,  following the procedures in \cite{AAAM-VK-VK-FP:22}, we rewrite the optimal inputs generated by the dynamic controller \eqref{eq: optimal input}  using a controller gain and a finite window of past inputs and outputs
 \begin{align}{\label{eq: static controller}}
	u(t) =  
	\subscr{K}{LQR}^{t}
	\begin{bmatrix}
		U_n(t) \\ Y_n(t)
	\end{bmatrix},
\end{align}
 where
\begin{align}\label{eq: K LQR t}
	\subscr{K}{LQR}^{t} = K^{t}\begin{bmatrix}
		Fu-A^n\Ocal^{\dagger}\bm{S}_{T-n+1} & A^n\Ocal^{\dagger}
	\end{bmatrix}
\end{align} and $U_n(t)$, $Y_n(t)$ are constructed using $n$ samples of past input and output data
\begin{align}
	U_n(t) = \begin{bmatrix}
		u(t-n) \\
		\vdots \\
		u(t-1)
	\end{bmatrix}, \;
	Y_n(t) = \begin{bmatrix}
		y(t-n) \\
		\vdots \\
		y(t-1)
	\end{bmatrix},
\end{align}
and  $Fu=\begin{bmatrix}
		A^{n-1}B & \cdots & AB & B
	\end{bmatrix}$ and $\Ocal = \begin{bmatrix}
		C^{\T} & (CA)^{\T} & \cdots & (CA^{n-1})^{\T}
	\end{bmatrix}^{\T}$.  
 In the next theorem we show that $\subscr{K}{LQR}^{t}$ can be computed using the weight matrices $Q$ and $R$ and the impulse responses $M(1:T)$.
\begin{theorem}{\label{thm: Output LQR controller}}{\bf{(Data-driven optimal controller)}}
The controller gain $\subscr{K}{LQR}^{t}$ in 
\eqref{eq: static controller} has the following alternative expression:
 \begin{equation}{\label{eq: Output feedback LQR}}
\begin{aligned}
	\subscr{K}{LQR}^{t} & = - 
	\left[ R+\bm{M}_{t+1}^\T (\bm{Q}_{t+1}^{-1}+ \bm{S}_{t+1} \bm{R}_{t+1}^{-1} \bm{S}^{\T}_{t+1})\bm{M}_{t+1} \right ]^{-1} \\
	& \bm{M}^{\T}_{t+1}(\bm{Q}_{t+1}^{-1} 
	+\bm{S}_{t+1}\bm{R}_{t+1}^{-1}\bm{S}^{\T}_{t+1})
	\begin{bmatrix}
	 E- F\bm{S}_{T-n+1} & F 
	\end{bmatrix},
\end{aligned} 
 \end{equation}
where $\bm{Q}_t = \emph{diag}(Q, \dots,Q)$
and $\bm{R}_t = \emph{diag}(R, \dots,R)$ respectively contain $T-t+1$ diagonal blocks; 
 $\bm{M}_t$ and $E$ are constructed using the impulse responses $M(1:T)$ collected from the unknown system:
{\footnotesize{
\begin{align*}
	\bm{M}_t&  = 
	\begin{bmatrix}
		M(1) \\
		\vdots \\
		M(T-t+2)
	\end{bmatrix}, 
	\\
	E  &= 
	\begin{bmatrix}
		M(n+1) & \cdots & M(2) \\
		\vdots & \ddots & \vdots \\
		M(T-t+n+2) & \cdots & M(T-t+3)
	\end{bmatrix}, \\
 \bm{S}_t  &= \begin{bmatrix} 0  \\
 M(1) & 0 \\
\vdots &  & \ddots  \\
M(T-t) & M(T-t-1) &  \cdots & M(1) & 0
 \end{bmatrix} .
\end{align*} }}
\end{theorem}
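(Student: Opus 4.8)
\noindent\emph{Proof strategy.}
The plan is to reduce the claimed identity to a ``batch'' reformulation of the finite‑horizon LQR problem together with the matrix inversion lemma, and then to observe that every product of system matrices that appears collapses to one of the impulse‑response matrices $\bm{M}_{t+1}$, $E$, $F$, $\bm{S}$ in the statement. Combining \eqref{eq: optimal input}, \eqref{eq: static controller} and \eqref{eq: K LQR t}, the optimal input is $u(t)=-K_t x(t)$ with $K_t=(R+B^\T P_{t+1}B)^{-1}B^\T P_{t+1}A$, while the state is recovered from the data window as $x(t)=\begin{bmatrix}Fu-A^n\Ocal^{\dagger}\bm{S}_{T-n+1} & A^n\Ocal^{\dagger}\end{bmatrix}\begin{bmatrix}U_n(t)\\ Y_n(t)\end{bmatrix}$; hence $\subscr{K}{LQR}^{t}=-K_t\begin{bmatrix}Fu-A^n\Ocal^{\dagger}\bm{S}_{T-n+1} & A^n\Ocal^{\dagger}\end{bmatrix}$ and it suffices to re‑express this using only $Q$, $R$ and $M(1:T)$.

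First I would unroll the dynamics over the remaining horizon $[t+1,T]$. Stacking the future outputs and inputs gives $\bar{Y}=\Ocal_{T-t}\,x(t+1)+\mathcal{T}\,\bar{U}$, where $\Ocal_{T-t}=[C^\T,(CA)^\T,\dots,(CA^{T-t-1})^\T]^\T$ and $\mathcal{T}$ is the block‑lower‑triangular Toeplitz matrix of Markov parameters for this horizon; up to a zero padding block it coincides with $\bm{S}_{t+1}$, so that $\mathcal{T}\bm{R}^{-1}\mathcal{T}^\T=\bm{S}_{t+1}\bm{R}_{t+1}^{-1}\bm{S}_{t+1}^\T$. Minimizing the cost‑to‑go $\bar{Y}^\T\bm{Q}_{t+1}\bar{Y}+\bar{U}^\T\bm{R}_{t+1}\bar{U}$ over $\bar{U}$ in closed form gives $x(t+1)^\T P_{t+1}\,x(t+1)$ with $P_{t+1}=\Ocal_{T-t}^\T\big(\bm{Q}_{t+1}-\bm{Q}_{t+1}\mathcal{T}(\bm{R}_{t+1}+\mathcal{T}^\T\bm{Q}_{t+1}\mathcal{T})^{-1}\mathcal{T}^\T\bm{Q}_{t+1}\big)\Ocal_{T-t}$, and the matrix inversion lemma rewrites the middle factor through $\bm{Q}_{t+1}^{-1}+\bm{S}_{t+1}\bm{R}_{t+1}^{-1}\bm{S}_{t+1}^\T$, matching the form in \eqref{eq: Output feedback LQR}. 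Equivalently, minimizing first over the tail inputs $u(t+1),\dots,u(T-1)$ leaves a quadratic in $u(t)$ alone and produces directly the $m\times m$ inverse in the statement, bypassing the Riccati recursion.

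Substituting the batch form of $P_{t+1}$ into $K_t$, the only products that occur are $B^\T P_{t+1}B=(\Ocal_{T-t}B)^\T(\cdots)(\Ocal_{T-t}B)$ and $B^\T P_{t+1}A\begin{bmatrix}Fu-A^n\Ocal^{\dagger}\bm{S}_{T-n+1} & A^n\Ocal^{\dagger}\end{bmatrix}=(\Ocal_{T-t}B)^\T(\cdots)\,\Ocal_{T-t}A\begin{bmatrix}Fu-A^n\Ocal^{\dagger}\bm{S}_{T-n+1} & A^n\Ocal^{\dagger}\end{bmatrix}$. Now $\Ocal_{T-t}B$ stacks the Markov parameters $M(1),M(2),\dots$, i.e.\ it equals $\bm{M}_{t+1}$; the block entries of $\Ocal_{T-t}A\,Fu$ are of the form $CA^{k+1}A^{n-1-j}B=M(k+n-j+1)$, so that product is the Markov‑parameter matrix $E$; and $\Ocal_{T-t}A^{n+1}\Ocal^{\dagger}$ is the free‑response predictor of the system, which by the multi‑step predictor identity of \cite{AAAM-VK-VK-FP:22} equals the Markov‑parameter matrix $F$. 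Collecting these substitutions yields \eqref{eq: Output feedback LQR}.

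The hard part will be that last identification: $\Ocal_{T-t}A^{n+1}\Ocal^{\dagger}$ involves, a priori, $A$, $C$ and the pseudo‑inverse $\Ocal^{\dagger}$, yet it must reduce to a matrix $F$ built only from $M(1:T)$. The facts needed are that $Y_n(t)-\bm{S}_{T-n+1}U_n(t)=\Ocal\,x(t-n)$ lies in the range of $\Ocal$, so $\Ocal\Ocal^{\dagger}$ acts as the identity on it, together with a Cayley--Hamilton argument expressing $CA^{k}$ for $k\ge n$ through the Markov parameters via their block‑Hankel structure --- precisely the content of the predictor result underlying \eqref{eq: K LQR t}. The remaining steps, the matrix inversion lemma and the matching of the block‑Toeplitz/Hankel index ranges of $\bm{M}_t$, $E$, $\bm{S}_t$ against the block dimensions of $\bm{Q}_t$, $\bm{R}_t$, are routine bookkeeping.
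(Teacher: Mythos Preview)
Your proposal is correct and follows essentially the same route as the paper: start from the batch closed form $P_{t}=C_t^{\T}(\bm{Q}_t^{-1}+\bm{S}_t\bm{R}_t^{-1}\bm{S}_t^{\T})^{-1}C_t$ (which the paper simply cites from \cite{GS-RES:00}, whereas you re-derive it by stacking the horizon and applying Woodbury), substitute it into $K_t$, and then recognize $C_{t+1}B=\bm{M}_{t+1}$ and $C_{t+1}A\,F_u=E$ as Markov-parameter blocks. The only point where you are slightly loose is the treatment of $F$: it is \emph{not} itself a block array of Markov parameters $M(k)$ the way $E$ is, so the phrase ``equals the Markov-parameter matrix $F$'' is misleading. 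The paper shows $F_i=CA^{i+n}\Ocal^{\dagger}$ and reduces this to data by (i) Cayley--Hamilton, giving the recursion $F_i=-\bm{\alpha}\,[F_{i-n}^{\T}\cdots F_{i-1}^{\T}]^{\T}$ with $\bm{\alpha}$ obtained from $M(1{:}2n)$ via \eqref{eq: estimate alpha}, and (ii) the seed values $F_i=(\Ocal\Ocal^{\dagger})_{i+n+1}$ for $i<0$, which are rows of $I_n$ when $l=1$ and are extracted from a QR factorization of the Hankel matrix \eqref{eq: QR decomposition} when $l>1$. You anticipate exactly this in your last paragraph, so the plan is sound; just be explicit that $F$ is obtained through the $\bm{\alpha}$-recursion rather than by reading off impulse responses directly.
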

We postpone the proof of Theorem \ref{thm: Output LQR controller} to the Appendix.
We use different expressions to compute matrix $F$ in  \eqref{eq: Output feedback LQR} for the single output ($l=1$) and multiple output ($l>1$) cases. We first discuss the case for $l=1$. 
 Let $\{\alpha_1, \cdots, \alpha_n\}$ represent the coefficients of the characteristic polynomial of $A$ in \eqref{eq: system}. According to the Cayley-Hamilton Theorem we have
  \begin{align}{\label{eq: Calay th}}
  A^n +\alpha_1I+ \alpha_2 A+\cdots+\alpha_n A^{n-1} = 0.
   \end{align}
   Define
  $\bm{\alpha}=\begin{bmatrix}
  	\alpha_1 & \cdots & \alpha_n
  \end{bmatrix}$ and let $M^j(t)$ with $j\in \{1, \cdots,m\}$ be the $j$-th column of $M(t)$, then  we obtain $\bm{\alpha}$ by
  \begin{align}{\label{eq: estimate alpha}}
  \bm{\alpha} 
  =
  \begin{bmatrix} 
  - M^j(1) & \cdots & M^j(n) \\
  \vdots &   \ddots &   \vdots \\
   M^j(n) & \cdots & M^j(2n-1) 
  \end{bmatrix}^{\dagger}
  \begin{bmatrix} 
  	M^j(n+1) \\ \vdots \\ M^j(2n)
  \end{bmatrix}.
    \end{align}
     For $-n\leq i<0$, let  $F_i $  be  the $(n+i+1)$th row  of $I_n$.  For $i \geq 0$, we calculate each $F_i$ recursively by  $F_i = -\bm{\alpha}
 \begin{bmatrix}
 	F_{i-n+1}^{\T} &
 	\cdots &
 	F_{i-1}^{\T}
 \end{bmatrix}^{\T}$. Then, $F = 
 \begin{bmatrix}
 F_1^{\T} & \cdots & F_{T-t+2}^{\T} 
 \end{bmatrix}^{\T}$.
The next remark gives the procedures to compute $F$ in \eqref{eq: Output feedback LQR} for the  case when $l>1$.
\begin{remark}{\bf \emph{(Computing  $F$ in \eqref{eq: Output feedback LQR} for $l>1$)}}
For the multiple output case we first  take the QR factorization of the following  data matrix
\begin{align}{\label{eq: QR decomposition}}
	LH = \begin{bmatrix}
		M(1) & \cdots & M(k) \\
		\vdots & \ddots & 	\vdots\\
		M(n) & \cdots &  M(k+n-1)
	\end{bmatrix},
\end{align}
 where $L \in \mathbb{R}^{nl\times nl}$ and $H \in \mathbb{R}^{nl \times mk}$, and 
 $k$ should satisfy $mk \geq n$. Let $\tilde{O}$ be the first $n$ columns of $L$ and define $\tilde{L} =\tilde{\Ocal}\tilde{\Ocal}^{\dagger} $, where $\tilde{L} = \begin{bmatrix}
	\tilde{L}_1^{\T} & \cdots &
	\tilde{L}_n^{\T} 
\end{bmatrix}^{\T} \in \mathbb{R}^{nl \times n}$ with $\tilde{L}_j \in \mathbb{R}^{l\times n} $. Similar to the single input case, for $1-n<i<0$, we define $F_i = \tilde{L}_{n+i+1}$. For  $i\geq 0$ and we compute $F_i$ recursively  by $ F_i =  - \bm{\alpha}\otimes I_l \begin{bmatrix}
	F_{i-n+1}^{\T} & \cdots & F_{i-1}^{\T}
\end{bmatrix}^{\T}$. 
\end{remark}
As $T \to \infty$,  $\subscr{K}{LQR}^{t}$ converges to 
\begin{align}{\label{eq: K LQR star}}
	\subscr{K}{LQR}^{*} = K^{*}\begin{bmatrix}
		Fu-A^n\Ocal^{\dagger}\bm{S}_{T-n+1} & A^n\Ocal^{\dagger}
	\end{bmatrix}.
	\end{align} Next we characterize the convergence from   $\subscr{K}{LQR}^{t}$  to   $\subscr{K}{LQR}^{*}$.
	\begin{lemma}{\label{lemma: convergence ana}}{\bf \emph{(Convergence  from  $\subscr{K}{LQR}^{t}$ to $\subscr{K}{LQR}^{*}$ }}) For a given $T$, let $\subscr{K}{LQR}^{t}$ and $\subscr{K^*}{LQR}$ be as in \eqref{eq: K LQR t} and 
\eqref{eq: K LQR star}, respectively. Then,
\begin{align}
	\|\subscr{K}{LQR}^{T-t} - \subscr{K}{LQR}^{*}\|\leq c_1 c_2 \mu^{t},
	\end{align}
	for $0<\mu<1$ and $c_1, c_2 \in \mathbb{R}^{+}$.
	\end{lemma}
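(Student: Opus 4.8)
The plan is to reduce the claim to the classical exponential convergence of the Riccati difference equation. The starting observation is that the block matrix
\[
W := \begin{bmatrix} Fu - A^{n}\Ocal^{\dagger}\bm{S}_{T-n+1} & A^{n}\Ocal^{\dagger} \end{bmatrix}
\]
occurring in both \eqref{eq: K LQR t} and \eqref{eq: K LQR star} depends only on the system matrices and the fixed horizon $T$, not on the running time index. Hence $\subscr{K}{LQR}^{T-t} - \subscr{K}{LQR}^{*} = (K^{T-t}-K^{*})\,W$, and with $c_2 := \norm{W}$ it remains to prove $\norm{K^{T-t}-K^{*}} \le c_1\,\mu^{t}$ for some $c_1 > 0$ and $0 < \mu < 1$, where $K^{T-t}$ denotes the finite-horizon gain \eqref{eq:Kt} at time $T-t$, which is a function of the Riccati iterate $P_{T-t+1}$.

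I would obtain this from the convergence of $P_\tau$ to $P^{*}$. Completing the square in \eqref{eq: Riccati Pt} gives, with $K_\tau := (R+B^{\T}P_{\tau+1}B)^{-1}B^{\T}P_{\tau+1}A$, the identity $P_\tau = (A-BK_\tau)^{\T}P_{\tau+1}(A-BK_\tau) + K_\tau^{\T}RK_\tau + C^{\T}QC$, and $P^{*}$ satisfies the same identity with $K^{*}$ in place of $K_\tau$. Subtracting, and using that $K^{*}$ minimizes $K\mapsto(A-BK)^{\T}P^{*}(A-BK)+K^{\T}RK$ with value at $K_\tau$ exceeding its minimum by exactly $(K_\tau-K^{*})^{\T}(R+B^{\T}P^{*}B)(K_\tau-K^{*})$, the error $E_\tau := P_\tau-P^{*}$ obeys
\[
E_\tau = (A-BK_\tau)^{\T}E_{\tau+1}(A-BK_\tau) + (K_\tau-K^{*})^{\T}(R+B^{\T}P^{*}B)(K_\tau-K^{*}),
\]
with $K_\tau-K^{*}$ itself $O(\norm{E_{\tau+1}})$. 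Since \eqref{eq: system} is controllable and observable, $\Phi := A-BK^{*}$ is Schur; choosing a norm with $\norm{\Phi}<1$ and using $A-BK_\tau\to\Phi$, an induction on the backward recursion (the quadratic term being dominated once $\norm{E_{\tau+1}}$ is small, with only finitely many early iterates left over, all bounded) yields $\norm{P_{T-s}-P^{*}}\le\beta\,\mu_0^{s}$ for all $s\ge 0$, for any $\mu_0\in(\rho(\Phi)^{2},1)$ and a suitable $\beta>0$; this is the standard convergence rate for the discrete-time Riccati difference equation \cite{KZ-JCD-KG:96}. Because $R+B^{\T}P^{*}B\succ 0$, the map $P\mapsto(R+B^{\T}PB)^{-1}B^{\T}PA$ is continuously differentiable, hence locally Lipschitz, on a neighborhood of $P^{*}$; as $P_{T-t+1}\to P^{*}$, for all $t$ past a finite threshold $\norm{K^{T-t}-K^{*}}\le L\norm{P_{T-t+1}-P^{*}}\le L\beta\,\mu_0^{t-1}$, and the finitely many earlier terms are bounded and absorbed by enlarging the constant. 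Taking $\mu:=\mu_0$, $c_1:=L\beta/\mu_0$ (adjusted to cover the initial terms), and combining with the first step gives $\norm{\subscr{K}{LQR}^{T-t}-\subscr{K}{LQR}^{*}}\le c_1 c_2\,\mu^{t}$.

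The only delicate ingredient is the exponential convergence $\norm{P_{T-s}-P^{*}}\le\beta\mu_0^{s}$ of the Riccati difference equation to the stabilizing solution, together with identifying the rate with $\rho(A-BK^{*})^{2}$; this is classical, so the remainder is bookkeeping: confirming that $W$ is independent of the running index, tracking the one-step shift between $K^{T-t}$ and $P_{T-t+1}$, and handling the finitely many early terms where the asymptotic Lipschitz bound does not yet apply.
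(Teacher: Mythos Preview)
Your proposal is correct and follows essentially the same route as the paper: factor $\subscr{K}{LQR}^{T-t}-\subscr{K}{LQR}^{*}=(K_{T-t}-K^{*})\,W$ with $W$ fixed, set $c_2=\norm{W}$, and invoke exponential convergence of the finite-horizon Riccati gain to $K^{*}$. The only difference is that the paper simply cites \cite{FC-GB-FP:22} for the bound $\norm{K_{T-t}-K^{*}}\le c_1\mu^{t}$, whereas you sketch a self-contained derivation via the error recursion for $P_\tau-P^{*}$ and the Lipschitz dependence of the gain on $P$; your argument is sound (the error identity you write is correct, as is the handling of the finitely many early iterates), and it makes explicit what the paper leaves to the reference.
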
 We postpone the proof of Lemma \ref{lemma: convergence ana} to the Appendix. 
\section{Transfer learning based optimal control}{\label{Section: transfer learning}
      \textcolor{black}{The optimal controller introduced in Theorem \ref{thm: Output LQR controller}  generates the optimal control sequences from a finite window of past inputs and outputs, 
  where the output feedback LQR gain $\subscr{K}{LQR}^{t}$ can be obtained from a sufficiently large amount of impulse response data. 
Recall that we consider the setting in which data collected from the target system $\mathcal{S}_0$ are insufficient to directly design the optimal controller or identify the dynamics, whereas trajectory data from $N$ source systems, denoted by  $\{ \Scal_i\}_{i=1}^{N}$, are available.} 
  Then finding $\subscr{K}{LQR}^{t}$ for the unknown target system $S_0$
  reduces to approximating the impulse response $M(T)$ of $\Scal_0$ for any $T$,   using $\{M_i(1:T_1)\}_{i=1}^N$ collected from 
  $\{ \Scal_i\}_{i=1}^{N}$ and $M(1:T_0)$ from $\Scal_0$.
  
\subsection{Estimating impulse response for target system}
To approach this problem we first show that if the modes (eigenvalues) of  $\Scal_0$ can be approximated from the source systems, the impulse responses $M(T)$ for any $T$ can be computed  for $\Scal_0$ by sampling a short impulse response trajectory $M(1:T_0)$, which is insufficient to identify the dynamics of $\Scal_0$.
    To begin with, we present  the following lemma to show how to estimate $M(T)$ if the system modes and a short trajectory of impulse response data $M(1:T_0)$ are given. We first make the following technical assumption and definition.  
\begin{Assumption}{\label{Assumpt: controllarbility}}{\bf \emph{(Observability and controllability)}} The target system $\Scal_0$ and source systems $\{\Scal_{i}\}_{i=1}^{N}$ have the same state, input and output dimensions, and are all controllable and observable. 
\end{Assumption}
\begin{definition}{\label{definition}}
	Let $p = \begin{bmatrix}
		p(1) & \cdots & p(n)
	\end{bmatrix} \in \mathbb{R}^{1 \times n}$, we define the following operator $\mathcal{H}(p,t) \in \mathbb{R}^{lm \times lmn}$ that uses vector $p$ to construct a block diagonal matrix with $l \times m$ numbers of blocks:
	{ 
	\footnotesize{
		\begin{align}
		\mathcal{H}(p,t) = 
		\begin{bmatrix}
		\begin{bmatrix}
		p(1)^t & \cdots & p(n)^t 
		\end{bmatrix} \\
		& & \ddots & \\
		& &&\begin{bmatrix}
		p(1)^t & \cdots & p(n)^t 
		\end{bmatrix}
			\end{bmatrix}.
	\end{align} }}
\end{definition}

\begin{lemma}{\label{Lemma: Decompose Markov para}}{\bf \emph{(System modes and impulse responses)}}
Let $\Tcal \in \mathbb{R}^{n \times n}$ be an invertible matrix such that $\Tcal^{-1}A\Tcal = \emph{diag}(\sigma_1,\cdots,\sigma_n)$ and define $\tilde{C} = C\Tcal$ and $\tilde{B} = \Tcal^{-1}B$. 
 Let $\Sigma := \{\sigma_1, \cdots, \sigma_n\}$ denote the modes  of $\Scal_0$. Let $\tilde{M}(t):= \emph{blkdiag} (M_1(t),\cdots,M_m(t))$, where $M_i(t)$ represents the $i$-th column of $M(t)$. Then $\tilde{M}(t)$ satisfies {
 \begin{align}{\label{eq: Decompose Markov para}}
\tilde{M}(t)
=
\mathcal{H}(\Sigma,t-1)
\underbrace{
\begin{bmatrix}
	W_1\\
	& \ddots  \\
	 & & W_m	\end{bmatrix}}_{\bm{W}},
\end{align}} where $\mathcal{H}$ is defined in Definition \ref{definition} and
 $ W_i: = 
 \begin{bmatrix}
 \tilde{b}_i^{\T} \emph{diag}(\tilde{c}_1) & \cdots & \tilde{b}_i^{\T} \emph{diag}(\tilde{c}_l)
 \end{bmatrix}^{\T}
$ with $\tilde{c}_i$ being the $i$-th row of $\tilde{C}$ and $\tilde{b}_i$  the $i$-th column of $\tilde{B}$ respectively. 
\end{lemma}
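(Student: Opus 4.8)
Identity \eqref{eq: Decompose Markov para} expresses the fact that the dependence of the impulse response on $t$ is carried entirely by the powers of the eigenvalues, so the plan is to make this explicit and then read off the block structure. Since the hypothesis grants $A=\Tcal\,\Lambda\,\Tcal^{-1}$ with $\Lambda:=\diag(\sigma_1,\dots,\sigma_n)$, and $\tilde{C}=C\Tcal$, $\tilde{B}=\Tcal^{-1}B$, the Markov parameters satisfy
\begin{align*}
M(t)=CA^{t-1}B=(C\Tcal)\,\Lambda^{t-1}\,(\Tcal^{-1}B)=\tilde{C}\,\Lambda^{t-1}\,\tilde{B}.
\end{align*}
First I would fix an input index $i\in\{1,\dots,m\}$ and isolate the $i$-th column $M_i(t)=\tilde{C}\,\Lambda^{t-1}\,\tilde{b}_i$. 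Writing $v_s:=\begin{bmatrix}\sigma_1^{s}&\cdots&\sigma_n^{s}\end{bmatrix}^{\T}$, so that $\Lambda^{t-1}=\diag(v_{t-1})$, and using that $\diag(v_{t-1})\,\tilde{b}_i=\diag(\tilde{b}_i)\,v_{t-1}$ (both equal the entrywise product of the two vectors), I obtain $M_i(t)=\tilde{C}\,\diag(\tilde{b}_i)\,v_{t-1}$.

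Next I would extract the $j$-th output component, $j\in\{1,\dots,l\}$. The $j$-th row of $\tilde{C}\,\diag(\tilde{b}_i)$ has $k$-th entry $\tilde{c}_{jk}\tilde{b}_{ik}$, hence equals $\tilde{b}_i^{\T}\diag(\tilde{c}_j)$; therefore $(M_i(t))_j=\tilde{b}_i^{\T}\diag(\tilde{c}_j)\,v_{t-1}$. Stacking over $j$ and using $(\tilde{b}_i^{\T}\diag(\tilde{c}_j))^{\T}=\diag(\tilde{c}_j)\,\tilde{b}_i$, this reads
\begin{align*}
M_i(t)=\blkdiag\big(v_{t-1}^{\T},\dots,v_{t-1}^{\T}\big)\,W_i,
\end{align*}
with $l$ diagonal blocks and with $W_i$ precisely the vector in the statement of the lemma.

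Finally I would assemble the columns. By definition $\tilde{M}(t)=\blkdiag(M_1(t),\dots,M_m(t))$; substituting the last display and using that the product of two block-diagonal matrices with conformable blocks is the block-diagonal matrix of the blockwise products,
\begin{align*}
\tilde{M}(t)=\blkdiag\big(\underbrace{v_{t-1}^{\T},\dots,v_{t-1}^{\T}}_{lm\text{ blocks}}\big)\,\blkdiag(W_1,\dots,W_m).
\end{align*}
The first factor is exactly $\mathcal{H}(\Sigma,t-1)$ of Definition \ref{definition}, and the second is $\bm{W}$, which proves \eqref{eq: Decompose Markov para}. I do not expect a genuine obstacle here: the argument is a direct computation, and the only point requiring care is the block-size bookkeeping, namely checking that the $lm$ diagonal blocks of $\mathcal{H}(\Sigma,t-1)$ organize into $m$ consecutive groups of $l$ blocks, one per input channel, in agreement with the column partition of $\tilde{M}(t)$ induced by $\blkdiag(W_1,\dots,W_m)$.
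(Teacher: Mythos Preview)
Your proof is correct and follows essentially the same approach as the paper's: both diagonalize $A$, compute each column $M_i(t)=\tilde{C}\,\Lambda^{t-1}\tilde{b}_i$, rewrite the $j$-th entry as $v_{t-1}^{\T}\diag(\tilde{c}_j)\tilde{b}_i$, and then stack into the block-diagonal form $\mathcal{H}(\Sigma,t-1)\bm{W}$. Your version is in fact more explicit about the commutation $\diag(v_{t-1})\tilde{b}_i=\diag(\tilde{b}_i)v_{t-1}$ and the block bookkeeping than the paper's terse computation, but the argument is the same.
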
 We postpone the proof of Lemma \ref{Lemma: Decompose Markov para} to the Appendix.  From \eqref{eq: Decompose Markov para} we have
\begin{align*}
\underbrace{
\begin{bmatrix}
	\tilde{M}(2) \\ \vdots \\ \tilde{M}(T_0+1)
\end{bmatrix}}_{\tilde{M}(2:T_0+1)} =
\underbrace{
\begin{bmatrix}
	\mathcal{H}(\Sigma,1)  \\  \vdots \\ \mathcal{H}(\Sigma,T_0) 
	\end{bmatrix}}_{{H}_{1:T_0}} \bm{W}.
\end{align*}
If ${H}_{1:T_0}$ is of full column rank, $\bm{W}$ can be simply  obtained by $ \bm{W}= H_{1:T_0}^{\dagger}\tilde{M}(2:T_0+1)$. Accordingly,
\begin{align}{\label{eq: stack matrix}}
	M(T) = 
	\begin{bmatrix}
		I_l & \cdots & I_l 
	\end{bmatrix}
\mathcal{H}(\Sigma,T-1)H_{1:T_0}^{\dagger} \tilde{M}(2:T_0+1).
\end{align}
\textcolor{black}{ 
The above results  demonstrate that $M(T)$ can be predicted for any $T$ using a short trajectory $M(1:T_0)$, provided that the modes of $\mathcal{S}_0$ are obtained}.
Next we characterize how large $T_0$ should be to compute $M(T)$ using \eqref{eq: stack matrix}.
\begin{lemma}({\bf{Sample complexity analysis}}) Let $T_0 \geq n$. Then, $M(T)$ for the unknown target system $\Scal_0$ can be reconstructed  from the modes $\Sigma$ and $M(1:T_0+1)$  by \eqref{eq: stack matrix}.
\end{lemma}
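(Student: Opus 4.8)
The plan is to check that the reconstruction formula \eqref{eq: stack matrix} returns the true Markov parameter $M(T)$ of $\mathcal{S}_0$ for every $T$; since $M(1)$ is already contained in the given trajectory $M(1:T_0+1)$, the only point at issue is the pseudoinverse step $\bm{W} = H_{1:T_0}^{\dagger}\,\tilde M(2:T_0+1)$. By Lemma \ref{Lemma: Decompose Markov para} the data obey $\tilde M(2:T_0+1) = H_{1:T_0}\bm{W}$ with $\bm{W}$ the block-diagonal coefficient matrix of \eqref{eq: Decompose Markov para}, so it suffices to prove that $H_{1:T_0}$ has full column rank once $T_0 \geq n$; then $H_{1:T_0}^{\dagger}H_{1:T_0} = I$, $\bm{W}$ is recovered exactly, and substituting back into \eqref{eq: stack matrix} gives $M(T) = [\,I_l\;\cdots\;I_l\,]\,\mathcal{H}(\Sigma,T-1)\bm{W} = [\,I_l\;\cdots\;I_l\,]\,\tilde M(T) = M(T)$ by a second application of Lemma \ref{Lemma: Decompose Markov para}.

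To establish the rank claim I would unpack the block structure of the operator $\mathcal{H}$. By Definition \ref{definition}, each $\mathcal{H}(\Sigma,t)$ is block-diagonal with $lm$ identical row blocks $[\,\sigma_1^{t}\;\cdots\;\sigma_n^{t}\,]$, so after permuting its rows and columns the stacked matrix $H_{1:T_0}$ becomes block-diagonal with $lm$ copies of the $T_0\times n$ matrix $V$ whose $(t,j)$ entry is $\sigma_j^{t}$. Hence $\operatorname{rank}(H_{1:T_0}) = lm\cdot\operatorname{rank}(V)$ and the question reduces to showing $\operatorname{rank}(V)=n$. Its $j$-th column equals $\sigma_j\,(1,\sigma_j,\dots,\sigma_j^{T_0-1})^{\T}$, i.e. $V$ is an ordinary Vandermonde matrix in $\sigma_1,\dots,\sigma_n$ with its columns rescaled by $\sigma_1,\dots,\sigma_n$; equivalently, the submatrix formed by its first $n$ rows is $\mathrm{Vand}(\sigma_1,\dots,\sigma_n)\,\diag(\sigma_1,\dots,\sigma_n)$, whose determinant $\big(\prod_{i<j}(\sigma_j-\sigma_i)\big)\big(\prod_i\sigma_i\big)$ is nonzero because the $n$ modes in $\Sigma$ are distinct and nonzero. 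Therefore $V$ attains rank $n$ exactly when it has at least $n$ rows, that is, precisely when $T_0\geq n$.

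The step I expect to be the real obstacle is this rank computation: one has to recognize the $lm$-fold replication built into $\mathcal{H}$, reduce to a single Vandermonde-type matrix, and confirm that $T_0 \geq n$ is the sharp threshold (dropping one row leaves the Vandermonde block non-square). A minor point to dispatch is the degenerate case of repeated or zero modes, where $H_{1:T_0}$ fails to be full column rank: there $\ker H_{1:T_0}$ is spanned by the differences $e_i-e_j$ with $\sigma_i=\sigma_j$ together with the coordinates associated with any zero mode, and every such vector also lies in $\ker\mathcal{H}(\Sigma,T-1)$ for $T\geq 2$, so the minimum-norm solution still produces the correct product $\mathcal{H}(\Sigma,T-1)\bm{W}$ and \eqref{eq: stack matrix} continues to hold.
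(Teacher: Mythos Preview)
Your proposal is correct and follows the same underlying idea as the paper: the reconstruction \eqref{eq: stack matrix} is valid once $H_{1:T_0}$ has full column rank, and the threshold for that is $T_0\geq n$. The paper's own proof, however, is much terser: it simply observes that $H_{1:T_0}\in\mathbb{R}^{lmT_0\times lmn}$ and that full column rank requires $lmT_0\geq lmn$, hence $T_0\geq n$. That is really only a \emph{necessary} condition; the paper does not verify sufficiency. Your Vandermonde argument fills exactly that gap, exploiting the $lm$-fold block-diagonal structure of $\mathcal{H}$ and the distinctness of the modes (implicitly assumed in the diagonalization of Lemma~\ref{Lemma: Decompose Markov para}) to conclude $\operatorname{rank}(H_{1:T_0})=lmn$ once $T_0\geq n$. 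Your additional treatment of repeated or zero modes goes beyond what the paper claims or needs, since the paper's framework already presupposes $n$ distinct (nonzero) modes, but it is a sound observation.
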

\begin{proof}
	 For the block diagonal matrix $H_{1:T_0} \in  \mathbb{R}^{lmt \times lmn}$ in \eqref{eq: stack matrix} to be of full column rank, we need $lmT_0 \geq lmn$. Therefore  $T_0 \geq n$. 
\end{proof}

\textcolor{black}{Conventionally, to compute $M(T)$ for an unknown target system $\Scal_0$ we  (i) first  identify the system matrices $\{A,B,C\}$ using the  Ho-Kalman algorithm \cite{SO-NO:22} from  $M(1:T_0+1)$;  (ii) then evaluate $M(T) = CA^{T-1}B$ for the desired horizon $T$. In particular, the Ho-Kalman  algorithm requires $T_0 +1\geq 2n$ to identify $\{A,B,C\}$ for an $n$-th order system, whereas when system modes $\Sigma$ are obtained, \eqref{eq: stack matrix} only requires   $T_0 +1 \geq n$,
\textcolor{black}{thereby reducing the sample complexity by $50\%$.} }
In the next  section we show how to  use $\{M_i(1:T_1)\}_{i=1}^N$ collected from the source system to 
learn $\Sigma$. 
\subsection{Mode transfer from  source systems}

Using \eqref{eq: estimate alpha} we can obtain the corresponding characteristic coefficients   $\bm{\alpha}$ for each source system, as defined in 
\eqref{eq: Calay th}
 , by collecting impulse response data $\{M_i(1:2n)\}_{i=1}^N$, then the roots of
\begin{align}{\label{eq: roots}}
\lambda^n +\alpha_n \lambda ^{n-1}+ \cdots +\alpha_2 \lambda +\alpha_1  = 0
\end{align} are the modes corresponding to  each source system. 
Let $\Lambda = \{\lambda_1, \dots, \lambda_k\}$ denotes the set of the distinct modes among  all source systems, solved by \eqref{eq: estimate alpha} and \eqref{eq: roots} using  $\{M_i(1:2n)\}_{i=1}^N$. We refer to $\Lambda$ as the mode dictionary for the rest of the paper.
  Next we partition $\Lambda$ into K numbers of subsets denoted by 
  \begin{align}{\label{eq: searching }}
   \{\Lambda_j\}_{j=1}^{K} := \{\lambda_{j1}, \cdots, \lambda_{j n} \}_{i=j}^{K},      
   \end{align}
   where $\Lambda_j \subset \Lambda$ and
 each subset $\Lambda_j$ contains $n$  elements from $\Lambda$, and $K = \binom{k}{n} = \frac{k!}{n!(k-n)!}$ is the total number of all possible combinations of $\Lambda_j$ constructed using the $n$ distinct elements from $\Lambda$.
 We present the next lemma to demonstrate how to estimate $\Sigma$ from  $\Lambda$ by using  the first $n+1$ samples of impulse responses $M(1:n+1)$ collected from the unknown target system $\Scal_0$. 
\begin{lemma}{\bf \emph{(Relations between $\Sigma$ and impulse responses)}}{\label{lemma: impulse response and modes}} Let $M(1:n+1)$ denotes the impulse responses collected from  $S_0$, and $ \Sigma :=   \{\sigma_1,  \cdots,  \sigma_n \}
      $ represents the modes of $\mathcal{S}_0$. Then,   
     \begin{align}{\label{eq: M and Sigma}}
     \begin{bmatrix}
     		M(n+1) & M(1) & \cdots & M(n)
     	\end{bmatrix}
     	\begin{bmatrix}
     	I_m \\
     	\bm{\alpha}^{\T} \otimes I_{m}
     	\end{bmatrix}=0,
     \end{align}
     where $\bm{\alpha}$ is defined in \eqref{eq: Calay th} and satisfies $\lambda^n +\alpha_n A^{n-1}+ \cdots +\alpha_2 \lambda +\alpha_1 = (\lambda - \sigma_1)(\lambda - \sigma_2)\cdots(\lambda - \sigma_n)$.
   \end{lemma}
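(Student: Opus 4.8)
The plan is to connect the impulse responses $M(t) = CA^{t-1}B$ directly to the characteristic polynomial of $A$ via the Cayley–Hamilton theorem. First I would recall that, by definition, $M(n+1) = CA^{n}B$ and $M(j) = CA^{j-1}B$ for $j = 1,\dots,n$. The key structural fact is equation \eqref{eq: Calay th}: since $\bm{\alpha} = \begin{bmatrix}\alpha_1 & \cdots & \alpha_n\end{bmatrix}$ collects the coefficients of the characteristic polynomial of $A$, we have $A^{n} = -(\alpha_1 I + \alpha_2 A + \cdots + \alpha_n A^{n-1})$. Multiplying on the left by $C$ and on the right by $B$ yields $CA^{n}B = -\sum_{j=1}^{n}\alpha_j \, CA^{j-1}B$, i.e.\ $M(n+1) = -\sum_{j=1}^{n}\alpha_j M(j)$.

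Next I would rewrite this scalar-coefficient linear combination as the block matrix identity in \eqref{eq: M and Sigma}. Stacking the blocks $M(1),\dots,M(n)$ horizontally and right-multiplying by $\bm{\alpha}^{\T}\otimes I_m$ produces exactly $\sum_{j=1}^{n}\alpha_j M(j)$ (using the mixed-product / block structure of the Kronecker product with $I_m$, since each $M(j)\in\mathbb{R}^{l\times m}$), so that
\begin{align*}
\begin{bmatrix} M(n+1) & M(1) & \cdots & M(n)\end{bmatrix}
\begin{bmatrix} I_m \\ \bm{\alpha}^{\T}\otimes I_m \end{bmatrix}
= M(n+1) + \sum_{j=1}^{n}\alpha_j M(j) = 0,
\end{align*}
which is the claimed relation. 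The final clause of the lemma — that $\lambda^{n} + \alpha_n\lambda^{n-1} + \cdots + \alpha_2\lambda + \alpha_1 = (\lambda-\sigma_1)\cdots(\lambda-\sigma_n)$ — is immediate from the setup of Lemma \ref{Lemma: Decompose Markov para}: the $\sigma_i$ are defined as the eigenvalues of $A$ (the diagonal entries of $\Tcal^{-1}A\Tcal$), so they are precisely the roots of the characteristic polynomial whose coefficients are the $\alpha_j$.

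I do not anticipate a serious obstacle here; the result is essentially a restatement of Cayley–Hamilton at the level of Markov parameters. The only point requiring a little care is the bookkeeping of the Kronecker product and the block ordering (why $\bm{\alpha}^{\T}\otimes I_m$ rather than $I_m\otimes\bm{\alpha}^{\T}$, and the correspondence between the column index of the big matrix and the coefficient $\alpha_j$), which amounts to checking that $(\bm{\alpha}^{\T}\otimes I_m)$ acts on the stacked $M(j)$ blocks by forming the weighted sum $\sum_j \alpha_j M(j)$. A secondary subtlety is the implicit genericity assumption that $A$ is diagonalizable (so that $\Tcal$ exists), but this is inherited from Lemma \ref{Lemma: Decompose Markov para} and the identity \eqref{eq: M and Sigma} itself holds regardless of diagonalizability since it only uses Cayley–Hamilton.
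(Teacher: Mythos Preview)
Your proposal is correct and follows essentially the same approach as the paper's own proof: apply Cayley--Hamilton to $A$, pre- and post-multiply by $C$ and $B$ to obtain $M(n+1)+\sum_{j=1}^{n}\alpha_j M(j)=0$, and then recognize this as the block-matrix identity \eqref{eq: M and Sigma}. The paper's proof is slightly terser (it does not spell out the Kronecker bookkeeping or the diagonalizability remark), but the argument is the same.
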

   We postpone the proof of Lemma \ref{lemma: impulse response and modes} to the Appendix.  Lemma \ref{lemma: impulse response and modes} shows that $M(1:n+1)$ and $\Sigma$ should always satisfy \eqref{eq: M and Sigma},  therefore we can use this property to learn $\Sigma$ from the mode dictionary $\Lambda$ by 
    \begin{align}\label{estimation}
    {\footnotesize
\bm{\hat{\alpha}}= 
\argmin{\bm{\alpha}_j} 
\underbrace{
  \left  \|
 \begin{bmatrix}
     		M(n+1) & M(1) & \cdots & M(n)
     	\end{bmatrix}
     		\begin{bmatrix}
     	I_m \\
     	\bm{\alpha}_j^{\T} \otimes I_{m}
     	\end{bmatrix}
     		 \right \|}_{Z(\bm{\alpha}_j)},}
	  \end{align} 
	  where $\bm{\alpha}_j$ denotes the coefficient vector corresponding to the set $\Lambda_j$.
	  \eqref{estimation} also provides a metric to measure the similarity between the source systems  and the target system in terms of their eigenvalues, which is captured by their impulse responses $M(1:n+1)$ and $\{M_i(1:2n)\}_{i=1}^{N}$. \textcolor{black}{In particular, if $Z(\hat{\bm{\alpha}}) = 0$, 
	    then $\Sigma \subset \Lambda$, meaning that the exact modes of the target system are accurately retrieved from \eqref{estimation}}. In practice, we can use \eqref{estimation} to select sources systems that return the smallest $Z(\hat{\bm{\alpha}})$. 
	  	 		  We summarize the above  procedures of learning   $\Sigma$ into Algorithm \ref{Al: mode detection}.
		  	\begin{algorithm}[!ht]
\caption{Modes transfer from source systems}
\label{Al: mode detection}
	\KwIn{
   $\{M_i(1:2n)\}_{i=1}^N$ and $M(1:n+1)$ \;\\ 
	}
	\textbf{Step 1:} {Construct the $\Lambda$ by \eqref{eq: estimate alpha} and \eqref{eq: roots}\\ 
	$ \hspace{2em} \Lambda \leftarrow   \{M_i(1:2n)\}_{i=1}^N$} 
	
	\textbf{Step 2:} Construct the $K$ combinations of modes
	\\ 
	 $ \; \; \{\Lambda_j\}_{j=1}^{K} \leftarrow   \Lambda$  
	
		\textbf{Step 3:} Compute the coefficient vectors $\{ \bm{\alpha}_j \}_{j=1}^{K}$
		\\ 
	$\{\bm{\alpha}\}_{j =1}^{K} \leftarrow \{\Lambda_j\}_{j=1}^{K}$  
	
	\textbf{Step 4:}
	Find the $\hat{\bm{\alpha}}$ that minimizes \eqref{estimation} \\
	${\hat{\bm{\alpha}}} \leftarrow \{\bm{\alpha}_j\}_{j=1}^{K}$ \\
	\KwOut{ The mode set $\hat{\Sigma}$ corresponding to $\hat{\bm{\alpha}}$}
\end{algorithm}
	\section{Numerical examples and comparison}
	\label{Sec: numerical example}

	We now analyze the results from Section \ref{Section: LQR design} and \ref{Section: transfer learning}  by means of numerical simulations. Suppose we have two source systems $\Scal_1 $ and $ \Scal_2$ with the following dynamics (unknown):
	{{
\begin{align*}
		A_1 & = 
		\begin{bmatrix}
    0.41  &  1.56  &  -1.59 \\
    0.06  &  1.34  & -1.25 \\
   -0.30  &  1.24  & -1.14
     \end{bmatrix},
   &   B_1 & = \begin{bmatrix}
   -0.47 \\
   -0.81 \\
   1.00
    \end{bmatrix}, \\
    C_1 &  = 
    \begin{bmatrix}
    	1.80 & -2.75 & 0.76
    \end{bmatrix}
 \\
		A_2 & =  \begin{bmatrix} 
    1.42 &   -2.91 &    3.58 \\
    1.24 &   -3.57 &    5.12 \\
    0.55 &   -2.37 &    3.74 		\end{bmatrix},
   & B_2  &= \begin{bmatrix}
    1.17 \\
    1.37 \\
    0.75
    \end{bmatrix}, \\
    C_2  &= 
    \begin{bmatrix}
 -0.46 &   0.03  &  1.36
     \end{bmatrix}.
    	\end{align*}}}
		  \noindent Then we collect $M_1(1:2n)$ and $M_2(1:2n)$ and follow \eqref{eq: estimate alpha}  the mode dictionary $\Lambda$ is obtained as: 
		\begin{align}
		\Lambda = 
\{
0.11,-0.52,1.02,0.21,0.36
 \}.
 \end{align}
 
\textcolor{black}{In the following, we analyze two scenarios. In the first scenario, all modes of the unknown target system are contained in the mode dictionary $\Lambda$, i.e., $\Sigma \subset \Lambda$. In the second scenario, the target system modes are not fully contained in the mode dictionary, that is, $\Sigma \not \subset \Lambda$.}

\subsection{Modes are fully contained in the dictionary: $\Sigma \subset \Lambda$}{\label{subsection A}}
Our  target system  $S_0$ is open-loop stable and has the following (unknown) dynamics:
{{
\begin{align*}
A & =  
\begin{bmatrix}
 1.93  &  -0.87 &  -0.27 \\
 1.11  &  -0.07 &  -0.36 \\
 -1.17  &  2.30  & -0.89
\end{bmatrix},
B =
\begin{bmatrix}
 -0.07 \\
    0.32\\
   -0.01\\
\end{bmatrix}, \\
C & =
\begin{bmatrix}
 -0.04 &  -0.32  &  2.33
\end{bmatrix}.
 \end{align*}}}
\!\!The impulse responses   
$M(1:4)$
are collected from the target system $\Scal_0$. 
\textcolor{black}{Following Algorithm \ref{Al: mode detection} we obtain $\min Z(\hat{\bm{\alpha}}) = 0$, which indicates that the all the modes of the unknown target system $\mc S_0$ are retrieved  from the mode dictionary $\Lambda$. The corresponding  $\hat{\bm{\alpha}}$ gives the detetced modes}
\begin{align*}
\hat{\Sigma} = \{ 0.36,  -0.52, 1.02 \}.
\end{align*}
Following \eqref{eq: stack matrix}
we obtain the impulse response  of $\Scal_0$ at time $T$ as    \begin{align}{\label{eq: estimated impulse}}
   	\hat{M}(T) = 
   	\begin{bmatrix}1.02^{T+1} &(-0.52)^{T+1} &
   	0.36^{T+1}
   	\end{bmatrix}
   	\begin{bmatrix}
   		  -0.48\\
   -2.69\\
       3.04
   	\end{bmatrix}.
   \end{align}
   Next we use the reconstructed impulse response data $\hat{M}(T)$ from \eqref{eq: estimated impulse} to compute the LQR controller following the procedures introduced in Section \ref{Section: LQR design}, where the weight matrices are set to be $Q = 4$ and  $R = 1$.   Specifically, we plug  the estimated impulse responses $\hat{M}(1:T)$ in \eqref{eq: Output feedback LQR} to compute the data-driven controller $\subscr{K}{LQR}^0$. We also characterize  the error between $\subscr{K}{LQR}^0$ and $\subscr{K}{LQR}^*$ for
   different data sizes $T$. As depicted in Fig.  \ref{Figure}, we observe that the error between $\subscr{K}{LQR}^0$ and $\subscr{K}{LQR}^*$ decreases exponentially as the number of impulse responses increase, which verifies our results from Section \ref{Section: LQR design} and Section \ref{Section: transfer learning} . 
    \begin{figure}[t]
  \centering
  \includegraphics[width=0.8\columnwidth,trim={0cm 0cm 0cm
    0cm},clip]{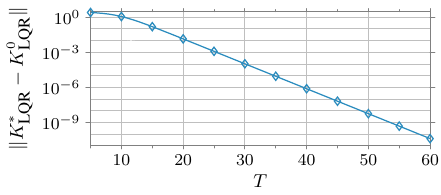}
  \caption{
  {\color{black} This figure shows the error
    $\|\subscr{K}{LQR}^* -    \subscr{K}{LQR}^0 \|$ as a function of the $T$. The error converges exponentially as the number of impulse responses  increases, which implies that  \eqref{eq: Output feedback LQR} reconstruct exactly the output feedback LQR gain}.
    }
  \label{Figure} 
  \end{figure}

\subsection{Modes are not fully contained in the dictionary:  $\Sigma \not \subset \Lambda$}
\textcolor{black}
{
We next consider the scenario in which the modes of the target system $\mc S_0$ are not fully contained within the mode dictionary $\Lambda$, that is, $\Sigma \not\subset \Lambda$. We first collect impulse response $M(1:4)$ from six distinct target systems and apply Algorithm \ref{Al: mode detection} to estimate the modes for each system. \eqref{estimation} returns the elements in the mode dictionary that are closest to the true modes of the target system, and the value of $Z(\hat{\bm\alpha})$ quantifies the discrepancy between the true and estimated modes.}

\textcolor{black}
{Following the procedures in subsection \ref{subsection A}, we characterize the control performance by evaluating the error between $\subscr{K}{LQR}^0$ and $\subscr{K}{LQR}^*$ for each target system. As illustrated in Fig. \ref{Fig2}, target systems associated with smaller values of $Z(\hat{\bm\alpha})$ exhibit notably lower errors compared to those with larger values of $Z(\hat{\bm\alpha})$. This observation is consistent with the fact that a smaller discrepancy between the source and target systems leads to more accurate controller synthesis.
}
 \begin{figure}[t]
  \centering
  \includegraphics[width=1\columnwidth,trim={0cm 0cm 0cm
    0cm},clip]{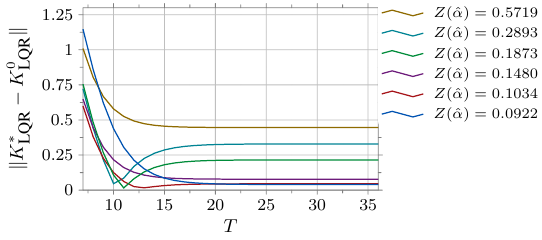}
  \caption{
  {\color{black} This figure shows the error
    $\|\subscr{K}{LQR}^* -    \subscr{K}{LQR}^0 \|$  for unknown target systems with different values of $Z(\hat{\bm \alpha})$ . The error is small for the target systems associated with  small values of $Z(\hat{\bm \alpha})$}.
    }
  \label{Fig2}
\end{figure}

\section{Conclusion}{\label{conclusion}}
In this paper, we study a transfer learning framework for LQR control, where impulse responses are used to learn a data-based LQR controller. We show how the LQR controller can be computed from sufficient impulse response data. Further, by leveraging data from source systems we show that a transfer  mode set can be reused across systems with different dynamics, thus significantly reducing the length of impulse responses for controller design. 
\textcolor{black}{
Our future research includes
(i) characterizing how discrepancies between the target system and its source systems affect closed-loop performance. 
(ii) Identifying conditions under which source system data fail to improve controller synthesis for the target system.
(iii) Generalizing the transfer-learning framework to nonlinear dynamics and to control problems with more general cost functions. 
}
\section{Appendix}
\subsection{Proof of Theorem \ref{thm: Output LQR controller} }
Let $C_t= \begin{bmatrix}
	C & (CA)^{\T} & \cdots & (CA^{N-t})^{\T}
\end{bmatrix}^{\T}$, the batch-form solution  \cite{GS-RES:00}  of the Ricatti equation in
\eqref{eq: Riccati Pt}  is
\begin{align}{\label{batch form}}
	P_{t} = C_t^{\T}(Q_t^{-1}+S_tR_t^{-1}S_t^{\T})C_t.
\end{align}
By substituting \eqref{batch form} into \eqref{eq:Kt} we obtain
{{
\begin{equation}
\begin{aligned}{\label{eq: proof of theorem 2}}
	\subscr{K}{LQR}^{t} = & - 
	\left [ R+\bm{M}_{t+1}^\T (\bm{Q}_{t+1}^{-1}+ \bm{S}_{t+1} \bm{R}_{t+1}^{-1} \bm{S}^{\T}_{t+1})\bm{M}_{t+1} \right ]^{-1} \\
	& \bm{M}^{\T}_{t+1}(\bm{Q}_{t+1}^{-1}+\bm{S}_t\bm{R}_{t+1}^{-1}\bm{S}^{\T}_{t+1})
	\bm{C}_{t+1} A.
 \end{aligned}
 \end{equation}
 }}
 Then by plugging \eqref{eq: proof of theorem 2} into \eqref{eq: static controller}
we obtain $E$ and $F$ in \eqref{eq: Output feedback LQR} as:
	\begin{align*}
	E  & = \bm{C}_{t+1}A F_u \\
	& = 
	\begin{bmatrix}
		CA \\  \vdots\\CA^{T-t+1}
	\end{bmatrix}
	\begin{bmatrix}
		A^{n-1}B & \cdots  &B
	\end{bmatrix} \\
	& =
	\begin{bmatrix}
		M(n+1) & \cdots & M(2) \\
		\vdots & \ddots & \vdots \\
		M(T-t+n+2) & \cdots & M(T-t+3)
	\end{bmatrix},
	\end{align*}
	and $F$ 
	\begin{align*}
	F &= \bm{C}_{t+1} A^{n} \mathcal{O}^{\dagger} \\
	& = 
	\begin{bmatrix}
		CA^{n+1}\mathcal{O}^{\dagger}
 \\ \vdots \\
		CA^{T-t+n+2}\mathcal{O}^{\dagger}
	\end{bmatrix} \\
	& = 
	\begin{bmatrix}
		F_1 \\
		\vdots \\
		F_{T-t+2}
	\end{bmatrix}.
		\end{align*}
		Let $\tilde{K} = \mathcal{O}\mathcal{O}^{\dagger}$ and 
		from \eqref{eq: Calay th} we have $
	 A^n = -\bm{\alpha} 
		\begin{bmatrix}
			I & A^{\T} & \cdots & {A^{n-1}}^{\T}
		\end{bmatrix}^{\T}$.  When $i\geq 0$,  we have 
		\begin{align}
		F_i =CA^{i+n}\mathcal{O}^{\dagger} = -\bm{\alpha} \begin{bmatrix}
			CA^{i} \mathcal{O}^{\dagger} \\ 
			\vdots \\
			CA^{i+n-1} \mathcal{O}^{\dagger}
		\end{bmatrix} = 
		-\bm{\alpha}
		\begin{bmatrix}
			F_{i-n} \\
			\vdots \\
			F_{i-1}
		\end{bmatrix}
		\end{align}		
		and for $ i <0 $ we have $F_i = CA^{i+n} \mathcal{O}^{\dagger} = \tilde{K}_{i+1}$ because 
		\begin{align}
		\tilde{K} = \begin{bmatrix}
			\tilde{K}_1 \\
			\vdots \\
			 \tilde{K}_n
		\end{bmatrix}= \mathcal{O}\mathcal{O}^{\dagger} =
		\begin{bmatrix}
			C\mathcal{O}^{\dagger} \\ \vdots \\ CA^{n-1}\mathcal{O}^{\dagger}
		\end{bmatrix}.
		\end{align} 
		
According to Assumption \ref{Assumpt: controllarbility},   we have $\mathcal{O}\mathcal{O}^{\dagger} = I_n$ for $l=1$, and when $l>1$ we compute $L$ using the QR decomposition of  the data matrix in \eqref{eq: QR decomposition}, where the first $n$ columns of the orthogonal matrix $L$, denoted by $\tilde{\mathcal{O}}$, has the same column space as $\mathcal{O}$. That is, there exists an invertible matrix $\mathcal{D}$ such that $\tilde{\mathcal{O}} = \mathcal{O} \mathcal{D}$, and $\tilde{\Ocal}^{\dagger} = \Ocal^{\dagger}{\Dcal}^{-1}$ because $\Ocal$ is of full column rank.
		Then we can compute $\tilde{L} = \tilde{\mathcal{O}}\tilde{\mathcal{O}}^{\dagger} = {\mathcal{O}} \Dcal \Dcal^{-1}{\mathcal{O}^{\dagger}} = \Ocal \Ocal^{\dagger}$.
		\subsection{Proof of Lemma \ref{lemma: convergence ana}} 
		
		For a given $T$, let $K_t$ and $K^*$ be as in \eqref{eq:Kt} and \eqref{eq:K}, receptively. 
		From \cite{FC-GB-FP:22}  we have 
		$\|K_{T-t} - K^{*}\| \leq c_1 \mu^{t} $ for $0 < \mu <1 $ and $c_1 \in \mathbb{R}^{+}$,  and
		{\footnotesize{
		 \begin{align}
		 \subscr{K}{LQR}^{T-t} - \subscr{K}{LQR}^{*} = (K_{T-t} - K^{*})
		 \underbrace{
		\begin{bmatrix}
		Fu-A^nO^{\dagger}\bm{S}_{T-n+1} & A^nO^{\dagger}
	\end{bmatrix}}_{H}.
	 \end{align}}}
	 Then we have $ \| \subscr{K}{LQR}^{T-t} - \subscr{K}{LQR}^{*} \| \leq  \|(K_{T-t}- K^{*}) \| \|H\| \leq c_1c_2 \mu^t$, where $c2 = \|H\|$.
\subsection{Proof of Lemma \ref{Lemma: Decompose Markov para}} 
Under Assumption \ref{Assumpt: controllarbility} there should exist $n$ distinct modes in $\Sigma$ and $A$ should satisfy $\Tcal^{-1}A\Tcal = \emph{diag}(\sigma_1,\cdots,\sigma_n)$, where $\Tcal \in \mathbb{R}^{n \times n}$ is an invertible matrix. Then, 
\begin{equation*}
{\footnotesize{
\begin{aligned}
	M_i(t) & = \begin{bmatrix}
	\tilde{c}_1 \\ \vdots \\ \tilde{c}_l
\end{bmatrix}
\begin{bmatrix}
	\sigma_1
	^{t-1}  \\ 
	    &    \ddots   \\
    & &   \sigma_n^{t-1}
\end{bmatrix} \\
\tilde{b}_i
& = 
 \begin{bmatrix}
 \begin{bmatrix}
 	\sigma_1^{t-1} & \cdots & \sigma_n^{t-1}
 \end{bmatrix} 
\emph{diag}(\tilde{c}_{1})\tilde{b}_i \\
\vdots \\
 \begin{bmatrix}
 	\sigma_1^{t-1} & \cdots & \sigma_n^{t-1}
 \end{bmatrix} 
\emph{diag}(\tilde{c}_{l})\tilde{b}_i
 \end{bmatrix} \\
 & =
 \begin{bmatrix}
 	\sigma_1^{t-1} & \cdots & \sigma_n^{t-1} \\
& &  &\ddots \\
 & & & & 	\sigma_1^{t-1} & \cdots & \sigma_n^{t-1}
 \end{bmatrix} 
 \underbrace{
  \begin{bmatrix}
  \emph{diag}(\tilde{c}_{1}){b}_i
  \\
  \vdots \\
  \emph{diag}(\tilde{c}_{l}){b}_i
  \end{bmatrix}}_{W_i}.
  \end{aligned}}}
\end{equation*}
This leads to \eqref{eq: Decompose Markov para}.
\subsection{Proof of Lemma \ref{lemma: impulse response and modes}}
Using  Cayley-Hamilton Theorem we have 
\begin{align}{\label{eq: lemma 1}}
A^n + I \alpha_1 +  A \alpha_2 +\cdots+ A^{n-1} \alpha_n= 0.
\end{align}
By multiplying $C$ to the left hand side of \eqref{eq: lemma 1} and multiplying $B$ to the right hand side of \eqref{eq: lemma 1} we obtain 
\begin{align*}
CA^nB + C B \alpha_1 +  CAB \alpha_2 +\cdots+ CA^{n-1}B \alpha_n  = 0,
\end{align*}
which is equivalent to 
\begin{align*}
	M(n+1)+ M(1) \alpha_1 + M(2) \alpha_2+\cdots+M(n)\alpha_n=0.
\end{align*}
This leads to \eqref{eq: M and Sigma}. 

\section*{References and Footnotes}

\bibliographystyle{unsrt}
\bibliography{alias,Main,FP,New}
\end{document}